\newfont{\bb}{msbm10 at 12pt}
\newcommand{\p}{\partial}
\newcommand{\bd}{\begin{definition}}                
\newcommand{\ed}{\end{definition}}                  
\newcommand{\bc}{\begin{corollary}}                 
\newcommand{\ec}{\end{corollary}}                   
\newcommand{\bl}{\begin{lemma}}                     
\newcommand{\el}{\end{lemma}}                       
\newcommand{\bp}{\begin{proposition}}            
\newcommand{\ep}{\end{proposition}}                
\newcommand{\bere}{\begin{remark}}                  
\newcommand{\ere}{\end{remark}}                     
\newcommand{\bt}{\begin{theorem}}
\newcommand{\et}{\end{theorem}}
\newcommand{\be}{\begin{equation}}
\newcommand{\ee}{\end{equation}}
\newcommand{\bit}{\begin{itemize}}
\newcommand{\eit}{\end{itemize}}
\newtheorem{theorem}{Theorem}[section]
\newtheorem{corollary}[theorem]{Corollary}
\newtheorem{lemma}[theorem]{Lemma}
\newtheorem{proposition}[theorem]{Proposition}
\theoremstyle{definition}
\newtheorem{definition}[theorem]{Definition}
\theoremstyle{remark}
\newtheorem{remark}[theorem]{Remark}
\newtheorem{example}[theorem]{Example}
\begin{document}
%

\title{Globally hyperbolic spacetimes can be defined without the `causal' condition}

\author{R. A. Hounnonkpe\footnote{Universit\'e d'Abomey-Calavi, B\'enin and Institut de Math\'ematiques et de Sciences Physiques (IMSP), Porto-Novo, B\'enin. E-mail:
rhounnonkpe@ymail.com} \ and \
E. Minguzzi\footnote{Dipartimento di Matematica e Informatica ``U. Dini'', Universit\`a degli Studi di Firenze,  Via
S. Marta 3,  I-50139 Firenze, Italy. E-mail:
ettore.minguzzi@unifi.it}}

\date{}
\maketitle

\begin{abstract}
\noindent Reasonable spacetimes are non-compact and of dimension larger than two. We show that  these spacetimes are globally hyperbolic if and only if the causal diamonds are compact. That is, there is no need to impose the causality condition, as it can be deduced.
We also improve the definition of global hyperbolicity for the non-regular theory (non $C^{1,1}$ metric) and for general cone structures by proving  the following convenient characterization for upper semi-continuous cone distributions: causality and the causally convex hull of compact sets is compact. In this case the causality condition cannot be dropped, independently of the spacetime dimension.
 Similar results are obtained for causal simplicity.
\end{abstract}



\section{Introduction}
The strongest causality condition in the causal hierarchy of spacetimes is global hyperbolicity, the next property being  causal simplicity \cite{minguzzi18b}.
This work is devoted to the simplification of their  definitions.


The causality condition of {\em global hyperbolicity} is of course the best known, for its fundamental importance in general relativity is widely recognized. Suffice here to mention   the Choquet-Bruhat and Geroch's theorem on the initial value problem \cite{choquet69} \cite[Thm.\ 10.2.2]{wald84}, the strong cosmic censorship conjecture \cite{penrose69}, and the metric splitting \cite{bernal03}.

Though less known {\em causal simplicity} has also attracted considerable interest, particularly in the study of the space of lightlike geodesics \cite{low89,chernov18,hedicke19}. It plays an important role  in questions of geodesic connectedness (e.g.\ in dimensional reduction of type: spacelike \cite[Thm.\ 3.13]{minguzzi06f}, lightlike \cite{minguzzi06d} or timelike \cite{caponio10,harris15})  and is  related to the notion of Cauchy holed spacetime \cite{geroch77,krasnikov09,manchak09,minguzzi12c}.

Interest on these properties has never faded so it is a bit surprising that new foundational results  can  still be obtained. This fact proves better than nothing else the richness and fruitfulness  of these concepts.

In Lorentzian causality theory the traditional definition of global hyperbolicity is the following \cite{hawking73} (the first definition by Leray \cite{leray52,hawking73} in terms of the space connecting causal curves is rarely used in causality theory and will not be recalled):
\begin{definition} \label{ngo}
A spacetime is globally hyperbolic if
\begin{itemize}
\item[(a)] it is strongly causal,
\item[(b)] the causal diamonds are compact.
\end{itemize}
\end{definition}
Similarly, the traditional definition of causal simplicity is \cite{hawking73}
\begin{definition} \label{klo}
A spacetime is causally simple if
\begin{itemize}
\item[($\alpha$)] it is distinguishing,
\item[($\beta$)] for every $p\in M$, $J^+(p)$ and $J^-(p)$ are closed.
\end{itemize}
\end{definition}
The reader is referred to \cite{hawking73,minguzzi18b} for the definitions of the strongly causal and distinction properties. We shall not use them because   Bernal and S\'anchez \cite{bernal06b} have shown that (a) and ($\alpha$) can be both weakened to {\em causality}: there is no closed causal curve.

It turns out that in several cases  better improvements are possible. Before we describe them let us recall our notations and terminologies.
In this work a manifold is always Hausdorff and second countable, hence paracompact. A spacetime $(M,g)$ is a connected time-oriented Lorentzian manifold whose metric $g$ is $C^2$ ($C^{1,1}$ will be enough) and of signature $(-,+,\cdots, +)$.
The inclusion $\subset$ is reflexive. With a curve $\gamma$ we might mean a map  $\gamma\colon I \to M$ or the image of the map.
We write $p<q$ if there is a causal curve connecting $p$ to $q$, and $p\ll q$ if there is a timelike curve connecting $p$ to $q$. We write $p\le q$ if $p<q$ or $p=q$. The sets $J=\{(p,q)\colon p\le q\}$ and $I=\{(p,q)\colon p\ll q\}$ are the causal and chronological relations respectively.
A {\em causal diamond} is a set of the form $J^+(p)\cap J^-(q)$.
The {\em causally convex hull} of a set $S\subset M$, is the set $J^+(S)\cap J^-(S)$, namely the union of the images of all the causal curves which start and end in $S$.
For most results of causality theory we refer to the recent review \cite{minguzzi18b}.

\section{Improved definitions}

Let us start by recalling a result of causality theory \cite[Thm.\ 4.12]{minguzzi18b}.
\begin{theorem} \label{clo}
The following properties are equivalent:
\begin{itemize}
\item[(i)] $J$ is closed in the topology of $M\times M$,
\item[(ii)] $J^+(p)$ and $J^-(p)$ are closed for every $p\in M$,
\item[(iii)] $J^+(K)$ and $J^-(K)$ are closed for every compact subset $K\subset M$,
\item[(iv)] the causal diamonds are closed.
\end{itemize}
\end{theorem}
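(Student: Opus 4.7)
My plan is to establish the cycle $(\mathrm{i}) \Rightarrow (\mathrm{iii}) \Rightarrow (\mathrm{ii}) \Rightarrow (\mathrm{i})$, showing the equivalence of the first three items, and then the additional equivalence $(\mathrm{ii}) \Leftrightarrow (\mathrm{iv})$. Two arrows are immediate: (iii)$\Rightarrow$(ii) by taking $K=\{p\}$, and (ii)$\Rightarrow$(iv) because a finite intersection of closed sets is closed. For (i)$\Rightarrow$(iii) I write $J^{+}(K)=\pi_{2}\bigl((K\times M)\cap J\bigr)$, where $\pi_2\colon M\times M\to M$ denotes the projection on the second factor; since $K$ is compact, $\pi_2$ restricted to $K\times M$ is a closed map, so closedness of $J$ forces closedness of $J^+(K)$, and $J^-(K)$ is treated symmetrically.

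The step (iv)$\Rightarrow$(ii) exploits the local Lorentzian structure. Given $q_n\in J^{+}(p)$ with $q_n\to q$, I would pick any $r$ with $q\ll r$; such a point exists because $(M,g)$ is time-oriented and a short future-directed timelike geodesic issued from $q$ inside a convex normal neighborhood produces one. Since $I^{-}(r)$ is open and contains $q$, it contains $q_n$ for large $n$, whence $q_n\in J^{+}(p)\cap J^{-}(r)$. By (iv) this causal diamond is closed, so $q\in J^{+}(p)\cap J^{-}(r)\subset J^{+}(p)$; the corresponding assertion for $J^-(p)$ is symmetric.

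The main obstacle is (ii)$\Rightarrow$(i), where pointwise closedness has to be upgraded to joint closedness of $J\subset M\times M$. Given $(p_n,q_n)\in J$ converging to $(p,q)$, I would introduce an auxiliary sequence $p'_{k}\to p$ with $p'_{k}\ll p$ (again available by the local structure of a time-oriented spacetime). For each fixed $k$, openness of $I^{+}(p'_{k})$ together with $p_n\to p$ yields $p'_{k}\ll p_n$ eventually, so $p'_{k}\le p_n\le q_n$ and hence $q_n\in J^{+}(p'_{k})$. Closedness of $J^{+}(p'_{k})$ then gives $p'_{k}\le q$, i.e.\ $p'_{k}\in J^{-}(q)$; letting $k\to\infty$ and using closedness of $J^{-}(q)$, one concludes $p\in J^{-}(q)$, that is $(p,q)\in J$. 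The delicate point is that one cannot pass to the limit simultaneously in both coordinates, so the auxiliary chronological approximants $p'_k$ are essential to decouple the two limits, which is precisely why hypothesis (ii) is invoked twice: once for $J^{+}(p'_k)$ and once for $J^{-}(q)$.
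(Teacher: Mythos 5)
The paper does not prove Theorem \ref{clo}; it is recalled from the literature with a citation to Minguzzi's review, so there is no in-paper proof to compare against. Your argument is correct and self-contained, and it follows the standard route used in the cited reference: closedness of the projection along a compact factor for (i)$\Rightarrow$(iii), a future chronological point $r\gg q$ to trap the sequence in a closed diamond for (iv)$\Rightarrow$(ii), and chronological approximants $p'_k\ll p$ to decouple the two limits in (ii)$\Rightarrow$(i). No gaps.
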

Thus in the definition of causal simplicity any of these properties can be conveniently used.

Our first contribution is to include another useful property to the previous list, namely
\begin{itemize}
\item[$(v)$] {\em the causally convex hulls of compact sets are closed}.
\end{itemize}

This result is a bit surprising because we are not demanding causality, a fact that will prove  important, as we shall see.
It will follow from the more general Prop.\ \ref{kcp}.

\begin{lemma} \label{hbp}
Let $(M,g)$ be a spacetime, and let $K=\{p,q\}$ with $p,q\in M$, then
\begin{align*}
J^+(K)\cap J^-(K)&=J^+(p)\cap J^-(q),  &\textrm{if} \ p\le q,\\
J^+(K)\cap J^-(K)&=J^+(q)\cap J^-(p),  &\textrm{if} \ q\le p, \\
J^+(K)\cap J^-(K)&=[J^+(p)\cap J^-(p)]\cup [J^+(q)\cap J^-(q)],  &\textrm{otherwise}.
\end{align*}
\end{lemma}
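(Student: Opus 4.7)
The plan is to proceed by straightforward set-theoretic expansion followed by case analysis. Since $J^\pm(K)=J^\pm(p)\cup J^\pm(q)$ for the two-point set $K=\{p,q\}$, distributivity immediately gives
\begin{align*}
J^+(K)\cap J^-(K) &= [J^+(p)\cap J^-(p)] \cup [J^+(p)\cap J^-(q)] \\
&\quad \cup [J^+(q)\cap J^-(p)] \cup [J^+(q)\cap J^-(q)].
\end{align*}
All three formulas in the lemma will be read off from this decomposition after we decide which of the four pieces are absorbed into others, and which are empty.

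In the case $p\le q$, the key tool is transitivity of $\le$: one has $J^+(q)\subset J^+(p)$ and $J^-(p)\subset J^-(q)$. Hence $J^+(p)\cap J^-(p)\subset J^+(p)\cap J^-(q)$, $J^+(q)\cap J^-(p)\subset J^+(p)\cap J^-(q)$, and $J^+(q)\cap J^-(q)\subset J^+(p)\cap J^-(q)$, so the union collapses to $J^+(p)\cap J^-(q)$. The converse inclusion is trivial since $J^+(p)\cap J^-(q)$ is one of the four summands. The case $q\le p$ is then obtained just by swapping the roles of $p$ and $q$.

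In the remaining case, when $p$ and $q$ are causally unrelated, the cross terms $J^+(p)\cap J^-(q)$ and $J^+(q)\cap J^-(p)$ must both be empty: any point $r$ in the first would witness $p\le r\le q$, contradicting $p\not\le q$, and similarly for the second. What remains is precisely $[J^+(p)\cap J^-(p)]\cup[J^+(q)\cap J^-(q)]$, as claimed. The only substantive ingredient in the whole argument is the transitivity of the causal relation, and the only point that requires a moment's thought is that ``causally unrelated'' really forces the two mixed intersections to vanish; everything else is pure distributivity of intersection over union.
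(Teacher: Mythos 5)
Your proof is correct and follows essentially the same route as the paper's: expand $J^+(K)\cap J^-(K)$ by distributivity into the four terms and use transitivity of $\le$ to collapse or eliminate them, taking care (as the paper does) not to assume causality. The only cosmetic difference is in the case $p\le q$: you absorb the cross term $J^+(q)\cap J^-(p)$ directly via the inclusions $J^+(q)\subset J^+(p)$ and $J^-(p)\subset J^-(q)$, whereas the paper notes that this term is either empty or, when $q\le p$ also holds, coincides with $J^+(p)\cap J^-(q)$ --- your handling is marginally more streamlined but not a different argument.
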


\begin{proof}
If $p\le q$, then $J^+(p)\cap J^-(p)\subset J^+(p)\cap J^-(q)$,  $J^+(q)\cap J^-(q)\subset J^+(p)\cap J^-(q)$, thus
\begin{align*}
J^+(K)\cap J^-(K)&=[J^+(p)\cap J^-(p)] \cup [J^+(q)\cap J^-(q)]\\
& \quad \cup [J^+(p)\cap J^-(q)]\cup [J^+(q)\cap J^-(p)]\\
&=[J^+(p)\cap J^-(q)]\cup [J^+(q)\cap J^-(p)]
\end{align*}
but if $J^+(q)\cap J^-(p)\ne \emptyset$ then $q\le p$, thus $J^\pm(q)=J^\pm(p)$, hence $J^+(q)\cap J^-(p)=J^+(p)\cap J^-(q)$. Thus in any case we have
\[
J^+(K)\cap J^-(K)=J^+(p)\cap J^-(q).
\]
The other cases are analogous or trivial.
\end{proof}

\begin{proposition} \label{kcp}
The closedness (compactness) of the causal diamonds is equivalent to the closedness (resp.\ compactness) of the causally convex hulls of compact sets.
\end{proposition}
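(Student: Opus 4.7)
The plan is to prove the equivalence in two directions. The forward direction, from compact (closed) causally convex hulls to compact (closed) diamonds, is essentially free: by Lemma \ref{hbp}, every causal diamond $J^+(p)\cap J^-(q)$ with $p\le q$ equals the causally convex hull $J^+(K)\cap J^-(K)$ of the compact two-point set $K=\{p,q\}$, so the hypothesized property transfers to diamonds without further argument.

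For the converse, assume first that all causal diamonds are closed. Then Theorem \ref{clo} immediately upgrades this to the statement that $J^+(K)$ and $J^-(K)$ are closed for every compact $K$, and $J^+(K)\cap J^-(K)$ is closed as the intersection of two closed sets. This case is done purely by invoking Theorem \ref{clo}; no new idea is needed.

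The compactness case is the only delicate step, and I would attack it by sequential compactness (which agrees with compactness since $M$ is second countable). Given a sequence $(x_n)$ in $J^+(K)\cap J^-(K)$, pick witnesses $p_n, q_n \in K$ with $p_n\le x_n\le q_n$. By compactness of $K$, extract a subsequence so that $p_n\to p$ and $q_n\to q$ in $K$, and by closedness of $J$ (granted by the previous paragraph) conclude $p\le q$. The obstacle, and the whole content of the proof, is that the $x_n$ need not lie in any single a priori compact set; the target diamond $J^+(p)\cap J^-(q)$ is compact by hypothesis, but the $x_n$ are not inside it for finite $n$.

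The key trick I would use is to fatten $p$ and $q$ to trap the tail of $(x_n)$ in a single compact diamond. Choose $p'\in I^-(p)$ and $q'\in I^+(q)$, which exist in any spacetime since local chronological pasts and futures are nonempty. Since $I^+(p')$ and $I^-(q')$ are open and $p_n\to p$, $q_n\to q$, for large $n$ we have $p'\ll p_n$ and $q_n\ll q'$, hence
\[
x_n\in J^+(p_n)\cap J^-(q_n)\subset J^+(p')\cap J^-(q').
\]
From $p'\ll p\le q\ll q'$ we get $p'\le q'$, so $J^+(p')\cap J^-(q')$ is a genuine causal diamond, compact by hypothesis. Thus $(x_n)$ has a convergent subsequence $x_n\to x$, and one last application of the closedness of $J$ yields $p\le x\le q$, placing $x$ in $J^+(K)\cap J^-(K)$. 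The anticipated main obstacle was precisely this enclosure step; once it is in place the argument closes routinely.
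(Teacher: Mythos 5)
Your proof is correct. The forward direction (hulls $\Rightarrow$ diamonds) and the closedness half of the converse coincide with the paper's own proof, resting on Lemma \ref{hbp} and Theorem \ref{clo} respectively. For the compactness half your route differs in packaging, though not in the key geometric idea. The paper covers $K$ by finitely many sets $I^+(q_i)\cap I^-(r_i)$ with $q_i,r_i$ drawn from a slightly larger compact set $\tilde K$, so that $J^+(K)\cap J^-(K)\subset \bigcup_{i,j} J^+(q_i)\cap J^-(r_j)$ is trapped inside a finite union of compact diamonds, and then concludes because the hull is a closed subset of a compact set. You instead run a sequential-compactness argument: you extract convergent witnesses $p_n\to p$, $q_n\to q$ in $K$ and fatten only these two limit points to $p'\ll p$ and $q\ll q'$, so that the tail of $(x_n)$ lies in the single compact diamond $J^+(p')\cap J^-(q')$. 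Both proofs hinge on the same chronological-fattening trick; the paper's version is purely topological (no appeal to metrizability or subsequence extraction) and gives the slightly stronger intermediate fact that the whole hull sits inside a finite union of diamonds, while yours is more local and invokes the closedness of $J$ twice --- which is legitimate here, since compact diamonds are closed in the Hausdorff manifold $M$, so Theorem \ref{clo} applies. One cosmetic remark: your detour through $p\le q$ to certify that $J^+(p')\cap J^-(q')$ is a ``genuine'' diamond is unnecessary, as the paper's definition of causal diamond does not require the tips to be causally related (and an empty diamond is compact anyway).
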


\begin{proof}

$\Rightarrow$, closedness case.

With Thm.\ \ref{clo} we have seen that the closedness of the causal diamonds implies that for every compact set $K$, $J^+(K)$ and $J^-(K)$ are closed, and hence that $J^+(K)\cap J^-(K)$ is closed.

$\Rightarrow$, compactness case.

Let $K$ be a compact set and let
$\tilde{K}$
be a compact set such that $K \subset  \textrm{Int} \tilde{K}$ . For each $p \in K$
we can find $q, r \in  \tilde{K}$ such that $p \in I^+(q)\cap I^-(r)$, so we can find a finite covering of
$K$, given by sets of the form $I^+(q_i) \cap I^-(r_i)$, $i = 1, \cdots, s$. Then $J^+(K) \cap J^-(K) \subset
\cup_{i,j}J^+(q_i) \cap J^-(r_j)$, which being the union of compact sets is compact. Since $J^+(K) \cap J^-(K)$ is a closed subset of a compact set it is compact.

$\Leftarrow$. In both the closedness and compactness cases it is a trivial consequence of  Lemma \ref{hbp}.
\end{proof}

For completeness, let us now recall the up to date definition of global hyperbolicity \cite[Def.\ 4.117]{minguzzi18b}.
The reader is once again referred to \cite{minguzzi18b} for the various terms entering this definition. We do not insist on them because for our work we shall focus on (1).

\begin{definition} \label{sot}
A spacetime is {\em globally hyperbolic} if the following equivalent conditions hold
\begin{itemize}
\item[(1)] Causality and for every $p,q\in M$, $J^{+}(p)\cap J^-(q)$ is compact.
\item[(2)] Non-total imprisonment and for every $p,q\in M$, $\overline{J^{+}(p)\cap J^-(q)}$ is compact.
\item[(3)] Stable causality and for every $p,q\in M$,  $J_S^{+}(p)\cap J_S^-(q)$ is  compact.
\item[(4)] There exists a Cauchy hypersurface.
\end{itemize}
\end{definition}

As mentioned, the improved
version (1) appeared in  \cite{bernal06b} and it is particularly convenient for the inclusion of global hyperbolicity into the causal ladder.

The version (2) appeared in
\cite{minguzzi08e}. It has several advantages over the others because it does not
require that the causal diamonds be closed (after all this property is so strong that, as we want to prove, it implies that in most cases the `causal' condition can be dropped).
Characterization (2) makes it clear
that by narrowing the cones one does not spoil global hyperbolicity, a fact
not at all obvious from the other formulations, save for (4). In fact trough this version it is possible to give a simple proof of the stability of global hyperbolicity \cite[Thm.\ 2.39]{minguzzi17}.

The
characterization (3) appeared in \cite{minguzzi17} and it is interesting because it shows that global hyperbolicity
is expressible through the Seifert relation (stable causality is equivalent to its antisymmetry).

Finally, the last characterization (4) due to Geroch is classical \cite{geroch70,hawking73}. It implies that the spacetime splits topologically as $M\sim S\times \mathbb{R}$ where $S$ is the Cauchy hypersurface. Bernal and S\'anchez have shown that in (4) the Cauchy hypersurface can be demanded to be $C^1$ and spacelike and that the splitting  can be chosen to be a diffeomorphism so that the  metric splits at each point \cite{bernal03}, see also \cite{fathi12,chrusciel13,muller13,bernard16,samann16,minguzzi17} for proofs holding under  weaker differentiability conditions or for more general cone distributions.

A first generalization that follows immediately from Prop.\ \ref{kcp} is

\begin{proposition} \label{mbj}
A spacetime is globally hyperbolic iff it is causal and the causally convex hulls of compact sets are compact.

A spacetime is causally simple iff it is causal and the causally convex hulls of compact sets are closed.
\end{proposition}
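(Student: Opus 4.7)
The plan is to derive both equivalences by combining the improved definitions from the paper with Proposition~\ref{kcp}, so that essentially no new causality-theoretic work is required.

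For the first statement, I would argue as follows. The forward direction is direct from Definition~\ref{sot}(1): global hyperbolicity yields causality and compactness of every causal diamond $J^+(p)\cap J^-(q)$, and Proposition~\ref{kcp} then upgrades compactness of diamonds to compactness of the causally convex hull $J^+(K)\cap J^-(K)$ of every compact $K\subset M$. For the converse, assume causality and that $J^+(K)\cap J^-(K)$ is compact for every compact $K$. Applying this to the two-point compact set $K=\{p,q\}$ and invoking Lemma~\ref{hbp} (specifically the $p\le q$ case, together with the trivial other cases) shows that every causal diamond $J^+(p)\cap J^-(q)$ is compact. Together with causality this is exactly condition (1) of Definition~\ref{sot}, so the spacetime is globally hyperbolic. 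Equivalently, this is just the ``$\Leftarrow$'' half of Proposition~\ref{kcp} in the compactness case.

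For the second statement, I would proceed analogously. By the Bernal--S\'anchez weakening mentioned after Definition~\ref{klo}, a spacetime is causally simple if and only if it is causal and $J^+(p), J^-(p)$ are closed for every $p$. By Theorem~\ref{clo}, the latter is equivalent to the closedness of all causal diamonds. By Proposition~\ref{kcp} in its closedness form, this in turn is equivalent to the causally convex hulls of compact sets being closed. Chaining these equivalences gives precisely the claimed characterization of causal simplicity.

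There is essentially no obstacle here: the proposition is a direct packaging of Proposition~\ref{kcp}, Theorem~\ref{clo}, and the Bernal--S\'anchez weakening of Definitions~\ref{ngo} and~\ref{klo} into a single convenient statement. The only mildly delicate step is remembering to cite the Bernal--S\'anchez strengthening so that (a) and ($\alpha$) in the traditional definitions can be replaced by plain causality; everything else reduces to invoking the equivalences already established in the preceding results.
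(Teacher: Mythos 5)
Your proposal is correct and follows exactly the route the paper intends: the paper states that Proposition~\ref{mbj} ``follows immediately from Prop.~\ref{kcp}'', i.e.\ one combines the compactness/closedness equivalence of Proposition~\ref{kcp} with Definition~\ref{sot}(1) and the Bernal--S\'anchez weakening of causal simplicity (via Theorem~\ref{clo}), which is precisely your chain of equivalences. No gaps.
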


In the next section  we show that {\em causal} can be dropped from these statements in most cases.

\subsection{Spacetime dimension larger than 2}

Reasonable spacetimes have, of course, dimension four. In  this section we explore the consequences of the condition $n+1\ge 3$, where $n+1$ is the spacetime dimension. It must also be said that this assumption is not unusual even in purely mathematical works, for 2-dimensional spacetimes are very peculiar in that the geodesics do not admit conjugate points.

We recall that a spacetime is {\em chronological} if it does not contain closed timelike curves. On a spacetime the set  $\mathcal{C}$ of points through which there passes a closed timelike curve is called {\em chronology violating set}.

A spacetime is {\em non-totally vicious} if there is some point through which no closed timelike curve passes. Stated in another way, the chronology violating set does not coincide with $M$, $\mathcal{C}\ne M$. Most authors in mathematical relativity agree that this is a very reasonable condition. Researchers do not study totally vicious spacetimes because their causality is trivial $J=I=M\times M$.

A spacetime is {\em reflecting} if the causal relation  satisfies the property $p\in \overline{J^-(q)}\Leftrightarrow q\in \overline{ J^+(p)}$. Clearly, if $J$ is closed then $J^\pm(r)$ is closed for every $r$ (by Thm. \ref{clo}) and so the spacetime is reflecting.

We start by recalling a result due to Clarke and Joshi \cite{clarke88} \cite[Prop.\ 4.26]{minguzzi18b}

\begin{proposition} \label{iqa}
A reflecting non-totally vicious spacetime is chronological.
\end{proposition}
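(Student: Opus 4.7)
The natural plan is to show that, in a reflecting spacetime, the chronology violating set $\mathcal{C}$ is both open and closed in $M$, and then to conclude via connectedness.

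The openness of $\mathcal{C}$ is automatic and requires no hypothesis at all: the chronological relation $I\subset M\times M$ is open, and $\mathcal{C}$ is the preimage of $I$ under the continuous diagonal embedding $p\mapsto(p,p)$. The serious work lies in proving, under the reflecting hypothesis, that $\mathcal{C}$ is also closed.

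Here is how I would attack closedness. Let $p\in\overline{\mathcal{C}}$ and pick a sequence $p_n\to p$ with $p_n\ll p_n$. For each $n$ the open set $U_n:=I^+(p_n)\cap I^-(p_n)$ contains $p_n$, and every point $r\in U_n$ satisfies $r\ll p_n\ll r$, hence $r\ll r$ as well; in other words, $U_n\subset\mathcal{C}$. Choose a point $r_n\in U_n$ distinct from $p_n$ (taken for instance along a closed timelike curve through $p_n$) and pass to a subsequence so that $r_n\to r$. Taking the ordinary limits of $p_n\ll r_n$ and $r_n\ll p_n$ yields $r\in\overline{J^+(p)}\cap\overline{J^-(p)}$. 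Applying the reflecting condition $q\in\overline{J^+(p)}\Leftrightarrow p\in\overline{J^-(q)}$ to both inclusions then gives, symmetrically, $p\in\overline{J^+(r)}\cap\overline{J^-(r)}$.

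The heart of the matter is to convert these closure relations into strict chronological relations $r\ll p$ and $p\ll r$, which together force $p\ll p$. To do this I would exploit the openness of $I^\pm$ at the chronology violating points: since each $r_n$ lies in the open chronology class $U_n$, there are neighborhoods around $r_n$ on which every pair of points is chronologically related both ways, and the approximating sequences provided by $p\in\overline{J^+(r)}$ and $p\in\overline{J^-(r)}$ can be slotted into these neighborhoods to produce actual timelike chains $p\ll\cdot\ll r$ and $r\ll\cdot\ll p$. Chaining these would give $p\ll p$, so $p\in\mathcal{C}$ and $\mathcal{C}$ is closed.

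Once closedness is in hand, the proposition follows immediately: $M$ is connected (by definition of spacetime), $\mathcal{C}$ is open and closed in $M$, and $\mathcal{C}\ne M$ by the non-total viciousness hypothesis, so $\mathcal{C}=\emptyset$ and $M$ is chronological. I expect the main obstacle to be making the chaining step rigorous, since a priori both $p$ and $r$ lie only on the achronal boundaries of their own futures and pasts; it is precisely the symmetry between past and future built into the reflecting condition that should allow one to pass from these boundary relations to genuine interior chronological relations, rather than getting stuck with the tautology $p\in\overline{I^\pm(p)}$ which holds in every spacetime.
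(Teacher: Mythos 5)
The paper itself does not prove Proposition \ref{iqa}; it recalls it from Clarke and Joshi, so your proposal must stand on its own. Your overall strategy (a nonempty chronology violating set is open and closed, conclude by connectedness) is the right one and the openness argument is fine, but the closedness argument has a genuine gap at its first nontrivial step. From $p_n\ll r_n$ and $r_n\ll p_n$ with $(p_n,r_n)\to(p,r)$ you may only conclude $(p,r)\in\bar J$ and $(r,p)\in\bar J$, the closure being taken in $M\times M$; this does \emph{not} yield $r\in\overline{J^+(p)}\cap\overline{J^-(p)}$, since in general $\bar J$ is strictly larger than $\{(x,y):y\in\overline{J^+(x)}\}$ (pointwise closedness of the futures is a different, stronger condition than closedness of the relation near a given pair). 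Moreover nothing prevents $r=p$ --- the closed timelike curves through $p_n$ may shrink to a point --- in which case even the desired conclusion collapses to the tautology $p\in\overline{I^{\pm}(p)}$ that you yourself flag. Finally, the chaining step cannot be completed as described: the open sets $U_n=I^+(p_n)\cap I^-(p_n)$ move with $n$, so there is no \emph{fixed} open set of mutually chronologically related points into which to insert the approximating sequences; to make the limit point $r$ usable you would need $r\in\mathcal{C}$, which is precisely the kind of statement you are trying to establish, so the argument becomes circular.

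The repair is to argue with a single fixed chronology class rather than with $\overline{\mathcal{C}}$ and sequences. Suppose the spacetime is non-chronological and fix a class $[p]=I^+(p)\cap I^-(p)$, which is nonempty and open. Take $q\in\p [p]$. Then $q\in\overline{I^+(p)}\cap\overline{I^-(p)}=\overline{J^+(p)}\cap\overline{J^-(p)}$, and reflectivity, applied to the fixed pair rather than to a limit, converts this into $p\in\overline{J^-(q)}\cap\overline{J^+(q)}=\overline{I^-(q)}\cap\overline{I^+(q)}$. The fixed open neighborhood $[p]$ of $p$ must therefore meet both $I^-(q)$ and $I^+(q)$: pick $r\in[p]\cap I^-(q)$ and $s\in[p]\cap I^+(q)$. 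Since any two points of $[p]$ are mutually chronologically related, $q\ll s\ll r\ll q$, hence $q\in[p]$, contradicting $q\in\p[p]$ for an open set. Thus $\p[p]=\emptyset$, $[p]$ is clopen, and connectedness gives $[p]=M$, i.e.\ total viciousness. This is exactly your intended ``slot into the open neighborhood'' step, but it only works because the open set is fixed in advance and consists of genuine chronology violating points.
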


In this section we are going to show that with the dimension condition the causality condition in the definition of {\em global hyperbolicity} and {\em causal simplicity} can be weakened to chronology, and then, thanks to this result, to {\em non-totally vicious}.

\begin{theorem} \label{jia}
Let $(M,g)$ be a spacetime of dimension $n+1\ge 3$ which is non-totally vicious. In the definitions \ref{ngo}-\ref{klo} of global hyperbolicity and causal simplicity the causality conditions (a) and ($\alpha$) can be dropped.
\end{theorem}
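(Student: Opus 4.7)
The plan is to run a single argument in parallel for the two cases (global hyperbolicity with compact causal diamonds, and causal simplicity with closed $J^\pm(p)$), obtaining causality as a \emph{consequence} of the remaining hypotheses rather than as an independent axiom.

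\textbf{Step 1 (closedness of $J$ and chronology).} The standing hypothesis in either case immediately yields that the relation $J$ is closed: compact diamonds are a fortiori closed, while closedness of $J^\pm(p)$ is precisely one of the equivalent conditions of Thm.~\ref{clo}. Closedness of $J$ trivially implies the reflecting property $p\in\overline{J^-(q)}\Leftrightarrow q\in\overline{J^+(p)}$. Combining reflectivity with non-total viciousness, Prop.~\ref{iqa} yields at once that $M$ is chronological. Note that the dimension has not been used up to this point.

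\textbf{Step 2 (chronology $\Rightarrow$ causality via $n+1\ge 3$).} This is the crucial step. Suppose by contradiction that some $p\in M$ satisfies $p<p$ via a closed causal curve $\eta$. By chronology, $\eta$ cannot admit any timelike portion, so it is achronal and hence, up to reparametrization, a closed null geodesic. I would exploit $n+1\ge 3$ by picking a spacelike vector $w$ at $p$ transverse to $\dot\eta(0)$ and perturbing the initial data of $\eta$ in the $w$-direction to produce a nearby causal curve that is genuinely timelike on one side of the transverse hyperplane. In dimension $\ge 3$ the null cone at $p$ has a positive-dimensional ``skin'' around $\dot\eta(0)$, so a transverse spacelike direction with this property really is available; in $n+1=2$ it is not, and the canonical obstruction is the flat cylinder $\R\times S^1$ with null generators, chronological but not causal. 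Using closedness of $J$ to control the limits of the return points of the perturbed family, one would obtain a sequence $q_k\to p$ with $p\ll q_k\le p$, hence $p\ll p$, contradicting chronology.

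\textbf{Step 3 (conclusion).} Once causality is established, the Bernal--S\'anchez strengthening recalled right after Def.~\ref{klo} allows (a) and ($\alpha$) to be replaced by mere causality; hence (a)+(b) yields global hyperbolicity and ($\alpha$)+($\beta$) yields causal simplicity, as required.

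The main obstacle I anticipate is Step~2: forcing the timelike perturbation of $\eta$ to actually \emph{close up} into a closed timelike curve, rather than merely producing a timelike open segment nearby. A generic transverse perturbation will not return to its starting point, so one must combine the geometric freedom afforded by $n\ge 2$ with the continuity properties of the causal relation flowing from closedness of $J$ (e.g.\ that $J^-(p)$ is closed, so any limit of endpoints of perturbed geodesics in $J^-(p)$ still lies in $J^-(p)$). I expect the technical implementation to require working in a convex normal neighbourhood of a suitable point on $\eta$ and arguing by an intermediate-value type consideration on the transverse slice.
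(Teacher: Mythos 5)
Your Steps 1 and 3 coincide with the paper's argument (closedness of $J$ gives reflectivity, Clarke--Joshi gives chronology, and the Bernal--S\'anchez result finishes once causality is in hand). The genuine gap is Step 2, and it is not merely the technical difficulty you anticipate: the perturbation strategy cannot work in principle. There exist chronological but non-causal spacetimes of dimension $\ge 3$ carrying closed achronal null geodesics --- e.g.\ the quotient of $3$-dimensional Minkowski space by the null translation $(t,x,y)\mapsto(t+a,x+a,y)$ --- so no argument using only chronology, the existence of the closed null geodesic $\eta$, and a local transverse deformation near a point of $\eta$ can manufacture the contradiction $p\ll p$. Indeed, the achronality of $\eta$ (which you correctly deduce from chronology) says precisely that the points $q_k$ with $p\ll q_k\le p$ you hope to produce cannot exist unless the spacetime already violates chronology; producing them is equivalent to the contradiction you seek, so the intermediate-value scheme is circular. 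Any successful proof must use the closedness of $J$ (equivalently, compactness or closedness of the diamonds) in an essential, global way --- in the Minkowski quotient above $J$ fails to be closed, and that is exactly where such an example is excluded.

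The paper's mechanism is quite different. From $I^{\pm}(r)=I^{\pm}(p)$ for every $r$ on the closed achronal lightlike geodesic $\gamma$, one considers the achronal boundary $N=\partial I^+(p)$, a $C^0$ hypersurface of dimension $n\ge 2$ containing $\gamma$. The dimension hypothesis enters through the Morse--Sard theorem (or the fact that Lipschitz maps do not raise Hausdorff dimension): the one-dimensional image of $\gamma$ cannot exhaust $N$, so there exists $q\in\partial I^+(p)\setminus\gamma$. Closedness of $J$ then forces $q\in J^+(p)$; the connecting causal curve must be an unbroken lightlike geodesic not aligned with $\gamma$, and concatenating it with an arc of $\gamma$ from some $r\in\gamma$, $r\ne p$, creates a corner, whence $q\in I^+(r)=I^+(p)$, contradicting $q\in\partial I^+(p)$. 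So the role of $n+1\ge 3$ is to guarantee a boundary point off the closed geodesic, not to provide room for a timelike deformation; your Step 2 needs to be replaced by an argument of this global type.
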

As shown in the previous section the other condition can be replaced by: the causally convex hull of compact sets is compact (closed, for causal simplicity); or equivalently by: the causal diamonds are compact (closed, for causal simplicity).

\begin{proof}
The condition that the causal diamonds are closed implies that $J$ is closed, hence that the spacetime is reflecting and hence, by Clarke and Joshi result, that it is chronological. We want to prove that it is causal. Suppose not, then there is a closed achronal lightlike geodesic (it is  $C^3$ because the connection is $C^1$) $\gamma\colon [a,b]\to M$, $\gamma(a)=\gamma(b)$, $\dot \gamma(a)\propto \dot \gamma(b)$, see e.g.\ \cite[Prop.\ 4.32]{minguzzi18b}
 (achronality implies that there are no discontinuities in the direction of the tangent vector).

Let $p\in \gamma$. Let $r\in \gamma$, as $\gamma$ is closed $r \in J^+(p)$. For every $s\in I^+(r)$ we have $s\in I^+(p)$, so by the arbitrariness of $s$,  $I^+(r)\subset I^+(p)$. Inverting the roles of $p$ and $r$ and considering the time dual result we get
 $I^+(r)=I^+ (p)$ and $I^-(r)=I^-(p)$.
The set $N:=\p I^+(\gamma)=\p I^+(p)$, which is necessarily non-empty as it contains $\gamma$, is an achronal boundary hence a $C^0$ achronal hypersurface \cite[Thm.\ 2.87]{minguzzi18b}.

Let $q\in \p I^+(p)\backslash \gamma$. It exists because the $C^1$ map $\gamma\colon[0,1]\to N$ has a domain consisting entirely of critical points so  by the Morse-Sard theorem \cite{hirsch76}  its image
cannot fill a manifold of dimension $n=\textrm{dim} N\ge 2$. Alternatively, the result follows from the fact that locally Lipschitz maps do not increase the Hausdorff dimension \cite[Thm.\ 5.5]{mattila95} \cite[Sec.\ 2.4.1]{evans92}.

Since $J$ is closed, $q\in J^+(p)$ but the causal curve connecting $p$ to $q$ must be a lightlike geodesic, for otherwise $q\in I^+(p)$, a contradiction. However, it cannot be aligned with $\gamma$ otherwise $q\in \gamma$. So if we take $r\in \gamma$, $r\ne p$, then $q\in I^+(r)=I^+(p)$, because there is a corner in the piecewise lightlike geodesic connecting $r$ to $q$,   again a contradiction.
\end{proof}

The spacetimes of  the previous theorem are certainly reasonable. Nevertheless, the reader can  prefer the following version where the non-totally vicious property is replaced by non-compactness.

\begin{theorem}
A non-compact spacetime of dimension $n+1\ge 3$  is globally hyperbolic iff its causal diamonds are compact.
\end{theorem}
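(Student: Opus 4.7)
The plan is to reduce to Theorem \ref{jia}. The forward direction of the iff is immediate from characterization (1) of Definition \ref{sot}. For the converse, suppose $(M,g)$ is non-compact, $\dim M = n+1 \ge 3$, and every causal diamond $J^+(p)\cap J^-(q)$ is compact. By Theorem \ref{jia}, once one also knows that $(M,g)$ is non-totally vicious, the causality condition (a) of Definition \ref{ngo} may be dropped, and global hyperbolicity follows at once from compactness of the diamonds. So the task reduces to excluding total viciousness under the present hypotheses.

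I would argue by contradiction: suppose $p \ll p$ for every $p \in M$. The relation $p \sim q \Leftrightarrow p \ll q \text{ and } q \ll p$ is symmetric, transitive by transitivity of $\ll$, and reflexive under total viciousness; hence an equivalence relation. Its equivalence classes are precisely the sets $C_p := I^+(p) \cap I^-(p)$, which are open (since $I^\pm$ are open relations) and nonempty (each contains $p$). Thus $\{C_p\}_{p \in M}$ is a partition of $M$ into nonempty open sets, and connectedness of $M$ forces this partition to have a single class. Therefore $M = I^+(p) \cap I^-(p) \subseteq J^+(p)\cap J^-(p)$ for any $p$; but the right-hand side is compact by hypothesis, contradicting non-compactness of $M$.

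The main (minor) obstacle is the equivalence-class argument that collapses a connected totally vicious spacetime onto a single causal diamond; the remainder is a direct appeal to Theorem \ref{jia}. Note that the dimension hypothesis $n+1 \ge 3$ enters the proof only through the invocation of Theorem \ref{jia}; the total viciousness reduction itself works in every dimension.
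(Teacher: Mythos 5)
Your proof is correct and follows essentially the same route as the paper: reduce to Theorem \ref{jia} by excluding total viciousness, and rule out total viciousness by showing $M$ would then equal a causal diamond, hence be compact. The only difference is that the paper simply cites the fact that a connected totally vicious spacetime satisfies $I^+(p)\cap I^-(p)=M$ (Prop.\ 4.20 of the cited review), whereas you reprove it inline via the open-equivalence-class and connectedness argument, which is exactly the standard proof of that fact.
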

Again the last condition can be replaced with: the causally convex hull of compact sets is compact. As we can see, all causality conditions connected to the formation of closed causal curves have been removed. The non-compactness condition is also very reasonable and most physicists would not  hesitate in including it  in the very definition of spacetime.

\begin{proof}
If the spacetime is totally vicious then for every $p\in M$ we have $[p]:=I^+(p)\cap I^-(p)=M$, see \cite[Prop.\ 4.20]{minguzzi18b}, and hence $I^\pm(p)=J^\pm(p)=M$. But then $M=J^+(p)\cap J^-(p)$ being a causal diamond must be compact, a contradiction.
\end{proof}

\subsection{Low differentiability and general cone distributions}

Recently there has been considerable interest in causality theory under weak differentiability conditions and for general cone distributions, see for instance \cite{minguzzi13d,kunzinger13b} for the $C^{1,1}$ theory, \cite{chrusciel12,sbierski15,samann16,galloway18b} for the $C^0$ metric theory, and \cite{fathi12,bernard16,minguzzi17} for  the general cone distribution case. It is natural to ask if the results of this work extend to the more general frameworks.

In this connection it is particularly convenient to introduce the notion of {\em closed cone structure} $(M,C)$, cf.\ \cite{bernard16,minguzzi17}, which is a distribution of closed sharp convex non-empty cones $p\mapsto C_p\subset T_pM\backslash 0$ such that $C=\cup_p C_p$ is closed in the topology of the slit tangent bundle $TM\backslash 0$. This last property is equivalent to the upper semi-continuity of the distribution $p\mapsto C_p$, see \cite[Prop.\ 2.3]{minguzzi17}. This structure preserves many results of causality theory  including the validity of the causal ladder  \cite[Thm.\ 2.47]{minguzzi17}. Notice that it is so general that it includes distributions of half-lines determined by continuous non-vanishing vector fields.

In a closed cone structure one has a well defined notion of causal relation, but no useful notion of chronological relation.
A sensible notion of chronological relation is obtained in a {\em proper cone structure} which is a closed cone structure for which  $(\textrm{Int} C)_p\ne \emptyset$ for every $p\in M$. Still, in these structures the identity $\p I^+(p)=\p J^+(p)$ or the identity $I\circ J\cup J\circ I \subset I$ do not necessarily hold. Spacetimes with $C^0$ metrics provide the simplest examples of proper cone structures.

\subsubsection{Proper cone structure}
Let us check if the results of the previous sections pass to the proper cone structure case.
The equivalence of properties $(i)-(iv)$ in Thm.\ \ref{clo} still holds true, see  \cite[Prop.\ 2.19, Thm.\ 2.37, Lemma 2.5]{minguzzi17}.
The proofs of Lemma \ref{hbp} and Prop.\ \ref{kcp} and hence the inclusion of property $(v)$ into the equivalence, passes through to the proper cone structure case with no alteration. The definition of causal simplicity remains the same (causality and closedness of $J$). For a proper cone structure the equivalence between (1)-(4) in the definition of global hyperbolicity still holds true \cite[Cor.\ 2.4, Thm.\ 2.45]{minguzzi17}, so Prop.\ \ref{mbj} passes through with no alteration in proof. For a proper cone structure the notion of reflectivity is also unaltered \cite[Prop.\ 2.16]{minguzzi17}, as it is the definition of non-totally vicious spacetime and the proof of Clarke and Joshi result (Prop.\ \ref{iqa}).

Unfortunately, we do not know if Thm.\ \ref{jia} passes to the proper cone structure case, certainly its proof does not.
Here $\gamma$ can be treated as an achronal continuous causal curve, but in the very last step of the proof we used the fact that these curves cannot branch. Unfortunately, we do not have at our disposal a non-branching result of this type (save for the $C^{1,1}$ Lorentz-Finsler theory \cite{minguzzi13d} for which the theorem does indeed hold).

\subsubsection{Closed cone structure}

In the closed cone structure case  the definition of causal simplicity is the usual one (causality and closedness of $J$), but Thm.\ \ref{clo} fails in some directions (the equivalence between $(i)$ and $(iii)$ holds true \cite[Prop.\ 1.4]{nachbin65}\cite[Thm.\ 2.37]{minguzzi17}).

\begin{example}
Let $M=\mathbb{R}^2\backslash\{(0,0)\}$ endowed with the translational invariant cone distribution $C_p=\mathbb{R}_+ \p_y$. Here properties $(ii)$, $(iv)$ of Thm.\ \ref{clo} hold true but $(i)$, $(iii)$ and $(v)$ do not.
\end{example}

As for global hyperbolicity,  its definition requires some care \cite{bernard16} \cite[Def,\ 2.20]{minguzzi17}.
 Definitions (3) and (4) of Def.\ \ref{sot} are valid and equivalent in the closed cone structure case \cite[Def.\ 2.20, Thm.\ 2.45]{minguzzi17},
but, as shown by  the previous example, the formulations of (1) and (2) have to be strengthened. Condition (2) becomes:
\begin{itemize}
\item[(2')] Non-imprisonment and the causally convex hulls of relatively compact sets are relatively compact.
\end{itemize}
As for (1) the following have been shown to be valid formulations  \cite[Def.\ 2.20]{minguzzi17} equivalent to (3) and (4),
\begin{itemize}
\item[(1a)] Causal simplicity and the causally convex hulls of compact sets are compact \cite{minguzzi12d,minguzzi17}.
\item[(1b)] Causality and for every pair of  compact sets $K_1,K_2$ the `causal emerald' $J^+(K_1)\cap J^-(K_2)$ is compact \cite{bernard16}.
\end{itemize}
The condition of compactness of causal emeralds implies the closedness of $J$, \cite[Thm.\ 2.38]{minguzzi17} \cite{bernard16}. Still these definitions suggest that the next more elegant formulation might be true
\begin{itemize}
\item[(1c)] Causality and the causally convex hulls of compact sets are compact.
\end{itemize}
This would be rather pleasing because in Prop.\ \ref{mbj} we proved that this is an acceptable definition in the $C^2$ metric case.
It is clear that (1c) improves both (1a) and (1b).

\begin{theorem} \label{mbi}
For a closed cone structure and under causality the property `causally convex hulls of compact sets are closed'  is equivalent to the closedness of $J$.
\end{theorem}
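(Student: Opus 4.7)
The proof splits into two directions.

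For the easy direction $(\Leftarrow)$, assume $J$ is closed. By the Nachbin-type equivalence $(i)\Leftrightarrow(iii)$ cited just before the Example (\cite[Prop.\ 1.4]{nachbin65}; \cite[Thm.\ 2.37]{minguzzi17}), the sets $J^+(K)$ and $J^-(K)$ are closed for every compact $K\subset M$; hence so is their intersection $J^+(K)\cap J^-(K)$. Note that this implication uses neither causality nor an indirect argument.

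For the substantive direction $(\Rightarrow)$, assume causality together with closedness of all causally convex hulls of compact sets. I argue by contradiction: suppose $J$ is not closed, so there exist sequences $p_n\to p$ and $q_n\to q$ with $p_n\le q_n$ but $p\not\le q$. Form the compact set $\tilde K:=\{p_n\}_n\cup\{p\}\cup\{q_n\}_n\cup\{q\}$ and the set $\tilde S:=J^+(\tilde K)\cap J^-(\tilde K)$, which is closed by hypothesis. For each $n$ choose a continuous future-directed causal curve $\sigma_n$ from $p_n$ to $q_n$; its image is contained in $\tilde S$.

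Now apply the limit curve theorem for closed cone structures (\cite[Thm.\ 2.21]{minguzzi17}): after extracting a subsequence, the $\sigma_n$ converge locally uniformly either to a causal curve from $p$ to $q$---immediately contradicting $p\not\le q$---or else to a future-inextendible causal curve $\sigma^+$ starting at $p$ and a past-inextendible causal curve $\sigma^-$ ending at $q$, both contained in the closed set $\tilde S$. In the inextendibility case, every parameter $t$ in the domain of $\sigma^+$ yields some $k(t)\in\tilde K$ with $\sigma^+(t)\le k(t)$. The possibility $k(t)=q$ gives $p\le\sigma^+(t)\le q$ and hence the contradiction $p\le q$; the possibility $k(t)=p$ with $t>0$, combined with $p\le\sigma^+(t)$, forces $\sigma^+$ to be constant at $p$ by causality and therefore violates inextendibility; so $k(t)\in\{p_m\}_m\cup\{q_m\}_m$ throughout, and symmetrically for $\sigma^-$.

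The main obstacle I anticipate is the final step. From the inextendibility of $\sigma^+$ and $\sigma^-$ inside the closed set $\tilde S$, together with the causal bounds $\sigma^+(t)\le k(t)\in\tilde K$ (and the symmetric statement for $\sigma^-$), one must extract a closed causal curve to contradict causality. The natural path is to apply the time-reversed limit curve theorem to causal connectors joining the escaping points $\sigma^+(t_n)$ to the elements $k(t_n)\in\tilde K$, which cluster to some $k^{*}\in\tilde K$ by compactness. This produces a past-inextendible limit curve $\alpha$ ending at $k^{*}$ and still contained in $\tilde S$. Iterating the construction between the $\sigma^+$ side and the $\sigma^-$ side, using causality and the compactness of $\tilde K$ to match up the endpoints picked out by $k(\cdot)$, should eventually yield a nontrivial closed causal curve, contradicting the causality assumption and completing the proof.
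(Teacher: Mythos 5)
Your $\Leftarrow$ direction is exactly the paper's (the Nachbin/Minguzzi equivalence $(i)\Leftrightarrow(iii)$), and your setup for $\Rightarrow$ --- limit curve theorem applied to connectors $\sigma_n$ lying in the closed hull $\hat K=J^+(K)\cap J^-(K)$ of the compact set of endpoints, followed by a case analysis using causality --- is the same general strategy. But the proof is not complete: you yourself flag the final step as an ``obstacle'' and replace it with a heuristic (``iterating the construction \dots should eventually yield a nontrivial closed causal curve''). That is a genuine gap, not a detail. Your case analysis leaves you with $\sigma^\pm$ inextendible inside $\hat K$ and the information $\sigma^+(t)\le k(t)$ with $k(t)\in\{p_m\}\cup\{q_m\}$, and from there nothing forces a closed causal curve: $\hat K$ is only closed, not compact, so there is no imprisonment argument, and the proposed iteration of limit curves is not shown to close up.

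The paper closes the argument with two ideas you are missing. First, it runs the whole machine in a preliminary stage on the compact set $\{p,q,q_1,q_2,\dots\}$ to prove that $J^+(p)$ and $J^-(p)$ are closed for every \emph{point}; this intermediate result is then available in the general stage. Second, in the general stage it (a) passes to a subsequence so that $p_n\nleq q$ for \emph{all} $n$ (otherwise one is done by closedness of $J^-(q)$), and (b) works only with the past-inextendible limit curve $\sigma^q$ ending at $q$, picking $r\in\sigma^q\setminus\{q\}$ and exploiting the domination \emph{from below}, $r\in J^+(K)$, together with $r\le q$. Every case then terminates: $p\le r$ gives $p\le q$; $p_n\le r$ gives $p_n\le q$, contradicting (a); $q\le r$ is a closed causal curve; and $q_n\le r$ for infinitely many $n$ gives $q\le r$ by the closedness of $J^-(r)$ established in the first stage, again contradicting causality. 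Your one-sided bound $\sigma^+(t)\le k(t)$ does not combine with anything (there is no analogue of $r\le q$ on that side), which is why your case analysis stalls. To repair the proof you should adopt the two-stage structure and the $J^+(K)$-from-below bookkeeping rather than trying to manufacture a closed causal curve by iterated limit curves.
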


\begin{proof}
$\Leftarrow$.
Immediate from the equivalence between  $(i)$ and $(iii)$ as proved in \cite[Prop.\ 1.4]{nachbin65}\cite[Thm.\ 2.37]{minguzzi17}

$\Rightarrow$. First let us prove that for every $p$ the sets $J^+(p)$ and $J^-(p)$ are closed. Let $q\in \overline{J^+(p)}$ and let $q_n\to q$ with $p\le q_n$. We want to prove that $p\le q$, which is obvious for $q=p$, so we can restrict ourselves to the case $q\ne p$. The set $K=\{p,q,q_1,q_2,\cdots\}$ is compact, thus $\hat K:=J^+(K)\cap J^-(K)$ is closed. Let $\sigma_n$ be a continuous causal curve connecting $p$ to $q_n$. We have $\sigma_n\subset \hat K$. By the limit curve theorem \cite[Thm.\ 2.14]{minguzzi17} either there is a limit continuous causal curve connecting  $p$ to $q$, hence $p\le q$, and we have finished, or there is a past inextendible limit continuous causal curve $\sigma^q$ ending at $q$. Since $\hat K$ is closed, we have $\sigma^q\subset \hat K$. Let $r\in \sigma^q\backslash\{q\}$, then $r\in \hat K$, thus $r\in J^+(K)$. There are various possibilities. Since $r < q$ the case $q\le r$ contradicts causality, thus it does not apply. The case $q_n\le r$ for some $n$ gives $p\le q$, because $r\le q$ and $p\le q_n$, and we have finished. The case $p\le r$ clearly gives $p\le q$. We conclude that  $J^+(p)$ is closed.

Now let $(p_n,q_n)\to (p,q)$, where $p_n\le q_n$, and let $\sigma_n$ be a continuous causal curve connecting $p_n$ to $q_n$.
We want to prove that $p\le q$, so we can assume $p\ne q$, for in the equality case this fact is trivial.
We also assume that $p_n\nleq q$ for all but a finite number of values of  $n$ otherwise by the closure of $J^-(q)$, $p \le q$ and we have finished. Thus we can assume, by passing to a subsequence if necessary, that $p_n\nleq q$ for every $n$.

As before by the limit curve theorem either  there is a limit continuous causal curve connecting  $p$ to $q$, in which case we have finished, or there is a past inextendible limit continuous causal curve $\sigma^q$ ending at $q$. The set $K=\{p,p_1, p_2, \cdots\} \cup \{q,q_1,q_2,\cdots\}$ is compact, thus $\hat K:=J^+(K)\cap J^-(K)$ is closed. Since $\sigma_n\subset \hat K$, we have $\sigma^q\in \hat K\subset J^+(K)$. Let $r\in\sigma^q\backslash\{q\}$, so that $r\in J^+(K)$. If $q_n\le r$ for infinitely many values of $n$, then by the closure of $J^-(r)$, $q\le r$ which is not possible since as $r< q$, it would violate causality. Similarly, if $q\le r$.
Thus we can assume that $p\le r$ or $p_n\le r$ for some $n$. The latter passibility is excluded because it would imply $p_n\le q$. The former case gives $p\le r\le q$, and we have finished.
\end{proof}

As  a consequence we have the next hoped for generalization of Prop.\ \ref{mbj}

\begin{corollary}
Let $(M,C)$ be a closed cone structure.

A spacetime is globally hyperbolic iff it is causal and the causally convex hulls of compact sets are compact.

A spacetime is causally simple iff it is causal and the causally convex hulls of compact sets are closed.

The characterizations (1c), (2'), (3) and (4) of global hyperbolicity are all equivalent.
\end{corollary}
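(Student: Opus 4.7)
The plan is to reduce the corollary to Theorem \ref{mbi} together with the already-established equivalences among the formulations (1a), (1b), (2'), (3), (4) of global hyperbolicity in the closed cone setting, which are recorded in the references cited just above the statement. The only genuinely new input the corollary asks for, beyond what Theorem \ref{mbi} already provides, is the passage from the ``compact hull'' version (1c) to causal simplicity.

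First I would dispatch the causal simplicity statement. By definition $(M,C)$ is causally simple iff it is causal and $J$ is closed. Under the hypothesis of causality, Theorem \ref{mbi} asserts precisely that closedness of $J$ is equivalent to closedness of the causally convex hulls of compact sets. Combining these two equivalences yields the second assertion of the corollary at once.

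Next I would prove the global hyperbolicity statement by showing that (1c) is equivalent to (1a). The implication (1a) $\Rightarrow$ (1c) is immediate since causal simplicity implies causality. For the converse, suppose the spacetime is causal and that $J^+(K)\cap J^-(K)$ is compact for every compact $K\subset M$. Compact subsets of a Hausdorff manifold are closed, so the causally convex hulls of compact sets are in particular closed; Theorem \ref{mbi} then gives closedness of $J$, and together with causality this is exactly causal simplicity. Hence (1a) holds and the spacetime is globally hyperbolic. The final claim, the equivalence of (1c), (2'), (3), (4), then follows by combining (1c)$\Leftrightarrow$(1a) with the previously noted equivalences (1a)$\Leftrightarrow$(1b)$\Leftrightarrow$(2')$\Leftrightarrow$(3)$\Leftrightarrow$(4) in the closed cone structure setting.

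The main obstacle, which is why this corollary could not be stated earlier, is entirely absorbed into Theorem \ref{mbi}: it is there that one must show, via the limit curve theorem and a delicate causality argument, that causality plus closedness of the causally convex hulls of compact sets forces $J$ to be closed. Once that theorem is in hand, the present corollary is a short observation, since compactness trivially entails closedness and no further argument is required to move from (1c) to (1a).
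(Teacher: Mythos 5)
Your proposal is correct and follows essentially the same route as the paper: the causal simplicity claim is immediate from Theorem \ref{mbi}, and the global hyperbolicity claim is obtained by noting that compactness of the hulls implies their closedness, so under causality Theorem \ref{mbi} upgrades (1c) to causal simplicity and hence to the known characterization (1a). The paper's proof is just a terser statement of exactly this reduction, so no further comment is needed.
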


\begin{proof}
The second statement follows immediately from Thm.\ \ref{mbi}. The first statement follows from the second and characterization (1a). The last statement is just a summary of already obtained results.
\end{proof}

Interestingly, the equivalence established in Theorem \ref{mbi}  does not hold without the causality assumption.

\begin{example}
The example has topology $S^1\times \mathbb{R}\times \mathbb{R}$, but the middle factor could be replaced with $\mathbb{R}^k$, $k\ge 0$, so showing that it does not depend on dimension. We start from the manifold  $N=\mathbb{R}\times \mathbb{R}\times \mathbb{R}$ and consider the vector field $v(x,y,z)=(1,0,1-z^2)$, for $z\in (-1,1)$, $v=(1,0,0)$ for $z=-1$, and $v(x,y,z)=v(x,y,z+2)$. The periodicity in $z$ implies $v=(1,0,0)$ for $z=1$. Notice that $v$ has integral lines of the form $t\mapsto (t+c_1,c_2,\tanh t)$ in $z\in (-1,1)$, where $c_1,c_2$, are constants, so they are asymptotic to the slices $z=-1$ and $z=1$, which are themselves generated by integral curves.
Now identify $x=0$ and $x=1$.
The closed cone structure given by $C=\mathbb{R}_+ v$ is not causal but it is such that the causally convex hulls of compact sets are compact. Moreover, taking $p$ such that $z(p)=-1$ and $q$ such that $z(q)=+1$, $y(q)=y(p)$, we have $(p,q)\in \bar J\backslash J$.
\end{example}

\section{Conclusions}

The very definition of spacetime is to some degree a matter of convention.
Most physicists  would concede that they could be defined as connected time-oriented Lorentzian manifolds of dimension four and non-compact or non-totally vicious, for otherwise their causal structure would be  trivial. We have shown that they are globally hyperbolic iff the causal diamonds are compact, with no need to introduce a ``causal'' condition, and hence with a considerable simplification of the traditional definition. Similar results have been obtained for the notion of causal simplicity.

Finally, the study of the regular and non-regular cases has clarified the importance of the condition `the  causally convex hull operation  preserves compactness'. In the regular case it can replace the standard assumption of compactness of the causal diamonds. For closed cone structures, it provides a characterization of global hyperbolicity but only jointly with causality.


\begin{thebibliography}{10}

\bibitem{bernal03}
A.~N. Bernal and M.~{S\'a}nchez.
\newblock On smooth {C}auchy hypersurfaces and {G}eroch's splitting theorem.
\newblock {\em Commun. Math. Phys.}, 243:461--470, 2003.

\bibitem{bernal06b}
A.~N. Bernal and M.~{S\'a}nchez.
\newblock Globally hyperbolic spacetimes can be defined as {`causal'} instead
  of {`strongly causal'}.
\newblock {\em Class. Quantum Grav.}, 24:745--749, 2007.

\bibitem{bernard16}
P.~Bernard and S.~Suhr.
\newblock Lyapounov functions of closed cone fields: from {C}onley theory to
  time functions.
\newblock {\em Commun. Math. Phys.}, 359:467--498, 2018.

\bibitem{caponio10}
E.~Caponio, M.~Javaloyes, and M.~S\'anchez.
\newblock On the interplay between {L}orentzian causality and {F}insler metrics
  of {R}anders type.
\newblock {\em Rev. {M}at. {I}beroamericana}, 27:919--952, 2011.

\bibitem{chernov18}
V.~Chernov.
\newblock Conjectures on the relations of linking and causality in causally
  simple spacetimes.
\newblock {\em Class. Quantum Grav.}, 35:105010, 2018.

\bibitem{choquet69}
Y.~Choquet-Bruhat and R.~Geroch.
\newblock Global aspects of the {C}auchy problem in general relativity.
\newblock {\em Commun. Math. Phys.}, 14:329--335, 1969.

\bibitem{chrusciel12}
P.~T. Chru{\'s}ciel and J.~D.~E. Grant.
\newblock On {L}orentzian causality with continuous metrics.
\newblock {\em Class. Quantum Grav.}, 29:{145001}, 2012.

\bibitem{chrusciel13}
P.~T. Chru{\'s}ciel, J.~D.~E. Grant, and E.~Minguzzi.
\newblock On differentiability of volume time functions.
\newblock {\em Ann. Henri {P}oincar{\'e}}, 17:2801--2824, 2016.
\newblock {arXiv}:1301.2909.

\bibitem{clarke88}
C.~J.~S. Clarke and P.~S. Joshi.
\newblock On reflecting spacetimes.
\newblock {\em Class. Quantum Grav.}, 5:19--25, 1988.

\bibitem{evans92}
L.~C. Evans and R.~F. Gariepy.
\newblock {\em Measure theory and fine properties of functions}.
\newblock {CRC} {P}ress, Boca {R}aton, 1992.

\bibitem{fathi12}
A.~Fathi and A.~Siconolfi.
\newblock On smooth time functions.
\newblock {\em Math. {P}roc. {C}amb. {P}hil. {S}oc.}, 152:303--339, 2012.

\bibitem{galloway18b}
G.~J. Galloway, E.~Ling, and J.~Sbierski.
\newblock Timelike completeness as an obstruction to {$C^0$-}extensions.
\newblock {\em Commun. Math. Phys.}, 359:937--949, 2018.

\bibitem{geroch70}
R.~Geroch.
\newblock Domain of dependence.
\newblock {\em J. Math. Phys.}, 11:437--449, 1970.

\bibitem{geroch77}
R.~Geroch.
\newblock {\em Prediction in general relativity}, volume Earman (ed.),
  Foundations of Space-Time Theories of {\em Minnesota Studies in the
  Philosophy of Science, vol. VIII}, pages 81--93.
\newblock University of Minnesota Press, Minneapolis, 1977.

\bibitem{harris15}
S.~G. Harris.
\newblock Static- and stationary-complete spacetimes: algebraic and causal
  structures.
\newblock {\em Class. Quantum Grav.}, 32:135026, 2015.

\bibitem{hawking73}
S.~W. Hawking and G.~F.~R. Ellis.
\newblock {\em The Large Scale Structure of Space-Time}.
\newblock Cambridge {U}niversity {P}ress, Cambridge, 1973.

\bibitem{hedicke19}
J.~Hedicke and S.~Suhr.
\newblock Conformally embedded spacetimes and the space of null geodesics.
\newblock {\em Commun. Math. Phys.}, 2019.
\newblock https://doi.org/10.1007/s00220-019-03499-0.

\bibitem{hirsch76}
M.~W. Hirsch.
\newblock {\em Differential topology}.
\newblock {Springer-Verlag}, New York, 1976.

\bibitem{krasnikov09}
S.V. Krasnikov.
\newblock Even {M}inkowski spacetime is holed.
\newblock {\em Phys. Rev. D}, 79:124041, 2009.

\bibitem{kunzinger13b}
M.~Kunzinger, R.~Steinbauer, M.~Stojkovi\'c, and J.~A. Vickers.
\newblock A regularisation approach to causality theory for
  {$C^{1,1}$-}{L}orentzian metrics.
\newblock {\em Gen. Relativ. Gravit.}, 46:1738, 2014.
\newblock arXiv:1310.4404v2.

\bibitem{leray52}
J.~Leray.
\newblock {\em Hyperbolic Differential Equations}.
\newblock IAS, Princeton, 1952.

\bibitem{low89}
R.~J. Low.
\newblock The geometry of the space of null geodesics.
\newblock {\em J. Math. Phys.}, 30:809--811, 1989.

\bibitem{manchak09}
J.~B. Manchak.
\newblock Is spacetime hole-free?
\newblock {\em Gen. Relativ. Gravit.}, 41:1639--1643, 2009.

\bibitem{mattila95}
P.~Mattila.
\newblock {\em Geometry of sets and measures in {E}uclidean spaces: {F}ractals
  and rectifiability}.
\newblock Cambridge {U}niversity {P}ress, Cambridge, 1995.

\bibitem{minguzzi06d}
E.~Minguzzi.
\newblock Eisenhart's theorem and the causal simplicity of {E}isenhart's
  spacetime.
\newblock {\em Class. Quantum Grav.}, 24:2781--2807, 2007.
\newblock {arXiv}:gr-qc/0612014.

\bibitem{minguzzi06f}
E.~Minguzzi.
\newblock On the causal properties of warped product spacetimes.
\newblock {\em Class. Quantum Grav.}, 24:4457--4474, 2007.
\newblock {arXiv}:gr-qc/0612182.

\bibitem{minguzzi08e}
E.~Minguzzi.
\newblock Characterization of some causality conditions through the continuity
  of the {L}orentzian distance.
\newblock {\em J. Geom. Phys.}, 59:827--833, 2009.
\newblock {arXiv}:0810.1879.

\bibitem{minguzzi12c}
E.~Minguzzi.
\newblock Causally simple inextendible spacetimes are hole-free.
\newblock {\em J. Math. Phys.}, 53:062501, 2012.
\newblock {arXiv}:1204.3012.

\bibitem{minguzzi12d}
E.~Minguzzi.
\newblock Convexity and quasi-uniformizability of closed preordered spaces.
\newblock {\em Topol. Appl.}, 160:965--978, 2013.
\newblock {arXiv}:1212.3776.

\bibitem{minguzzi13d}
E.~Minguzzi.
\newblock Convex neighborhoods for {L}ipschitz connections and sprays.
\newblock {\em Monatsh. Math.}, 177:569--625, 2015.
\newblock {arXiv}:1308.6675.

\bibitem{minguzzi17}
E.~Minguzzi.
\newblock Causality theory for closed cone structures with applications.
\newblock {\em Rev. Math. Phys.}, 31:1930001, 2019.
\newblock {arXiv}:1709.06494.

\bibitem{minguzzi18b}
E.~Minguzzi.
\newblock Lorentzian causality theory.
\newblock {\em Living Rev. Relativ.}, 22:3, 2019.
\newblock {https}://doi.org/10.1007/s41114-019-0019-x.

\bibitem{muller13}
O.~M\"uller.
\newblock Special temporal functions on globally hyperbolic manifolds.
\newblock {\em Lett. Math. Phys.}, 103:285--297, 2013.
\newblock Erratum to appear in {LMP}.

\bibitem{nachbin65}
L.~Nachbin.
\newblock {\em Topology and order}.
\newblock D.\ {V}an {N}ostrand {C}ompany, {I}nc., Princeton, 1965.

\bibitem{penrose69}
R.~Penrose.
\newblock Gravitational collapse - the role of general relativity.
\newblock {\em Riv. del {N}uovo {C}im. {N}umero {S}peziale}, 1:252--276, 1969.
\newblock Reprinted in {G}en. {R}el. {G}rav. vol 34 p. 1142--1164 (2002).

\bibitem{samann16}
C.~S{\"a}mann.
\newblock Global hyperbolicity for spacetimes with continuous metrics.
\newblock {\em Ann. Henri Poincar{\'e}}, 17(6):1429--1455, 2016.

\bibitem{sbierski15}
J.~Sbierski.
\newblock The {$C^0$}-inextendibility of the {S}chwarzschild spacetime and the
  spacelike diameter in {L}orentzian geometry.
\newblock {\em J. Diff. Geom.}, 108:319--378, 2018.

\bibitem{wald84}
R.~M. Wald.
\newblock {\em General Relativity}.
\newblock The {U}niversity of {C}hicago {P}ress, Chicago, 1984.

\end{thebibliography}

\end{document}